\newtheorem{thmm}{Theorem}
\newcommand*\diff{\mathop{}\!\mathrm{d}} 
\title{ Cyber-Attack Detection in Socio-Technical Transportation Systems Exploiting Redundancies Between Physical and Social Data}
\author{Tanushree~Roy,~Sara Sattarzadeh,~and~Satadru~Dey
\thanks{T. Roy, S. Sattarzadeh, and S. Dey are with the Department of Mechanical Engineering,
        The Pennsylvania State University, University Park, Pennsylvania 16802, USA.
        {\tt\small \{tbr5281,sfs6216,skd5685\}@psu.edu}.}%
}
\begin{document}
\maketitle
\thispagestyle{firstpage}
\begin{abstract}
Cyber-physical-social connectivity is a key element in Intelligent Transportation Systems (ITSs) due to the ever-increasing interaction between human users and technological systems. Such connectivity translates the ITSs into dynamical systems of socio-technical nature. Exploiting this socio-technical feature to our advantage, we propose a cyber-attack detection scheme for ITSs that focuses on cyber-attacks on freeway traffic infrastructure. The proposed scheme combines two parallel macroscopic traffic model-based Partial Differential Equation (PDE) filters whose output residuals are compared to make decision on attack occurrences. One of the filters utilizes physical (vehicle/infrastructure) sensor data as feedback whereas the other utilizes social data from human users' mobile devices as feedback. The Social Data-based Filter is aided by a fake data isolator and a social signal processor that translates the social information into usable feedback signals. Mathematical convergence properties are analyzed for the filters using Lyapunov's stability theory. Lastly, we validate our proposed scheme by presenting simulation results.
\end{abstract}

\keywords{socio-technical systems, cyber-attack, attack detection}

\section{Introduction}

{S}{ecurity} against cyber-attacks is one of the crucial criteria for today's emerging Intelligent Transportation Systems (ITSs). Such ITSs present both new threats as well as opportunities which were not existent in the conventional transportation systems. For example, while large-scale cyber-physical connectivity in ITSs exposes them to unprecedented vulnerability to cyber-threats, social connectivity through human-ITS interactions generate additional information for better management of such ITSs \cite{gowrishankar2014including,zheng2018framework}. Specifically, such social connectivity generates enormous amount of data that can be utilized meaningfully to enhance our system knowledge and transportation control algorithms. In this context, the goal of this work is to design safe-guarding mechanisms for ITSs using a system-theoretic framework and a data-fusion strategy.


\subsection{Background and motivation}
Data-fusion has been extensively used in the field of autonomous driving, energy, military surveillance and reconnaissance and medical fields\cite{becker2000sensor,grid_data_fusion,defense_data_fusion,book_data_fusion}. Based on the classification by Durrant-Whyte \cite{Data_fusion}, data fusion can be \textit{complementary, cooperative} or \textit{redundant}. A data fusion is called redundant if information from multiple data sources are aggregated for a single target system. This redundant data fusion technique is essential to increase the confidence in the data or to provide contingencies in case of corruption of one or more data sources. In relation to transportation, data is generated not only from vehicular and infrastructure (physical) sensors but also through (social) sensors such as mobile-devices generating position data through GPS or tweets. Data from these different modalities can be fused to gather traffic information. Specifically, we intend to exploit the \textit{redundancies} present in social and physical data. Such redundancy arises as physical infrastructure sensors measure vehicle positions whereas mobile device GPS or tweets also posses similar information. Having similar information from different data sources could be a key to detect anomalies in the traffic networks. In this work, we utilize such data fusion in conjunction with PDE system theoretic techniques for cyber-attack detection.

\subsection{Literature review}

Physical data in traffic infrastructure arises from vehicular and infrastructure sensors \cite{zheng2018framework,zheng2016big}. This data, in general, requires less processing and have higher reliability in terms of availability and delay. In contrast, social data \cite{Roy_its_2021} is generated by the human users-in-the-loop using mobile devices, such as messages on networking platforms and GPS coordinates of mobile devices. These social data are less-structured, less reliable in terms of availability and delay, and require more processing efforts. This distinction makes it especially challenging to combine these two types of data in a system-theoretic framework. 

With increasing interest in  detection of cyber-attacks on ITSs, many researchers have utilized physical data to detect Denial-of-Service (DoS) attack \cite{biron2017resilient,he2012mitigating,petrillo2018collaborative}, replay attack \cite{merco2018replay}, complex congestion pattern \cite{REILLY2016}, \cite{roy_dey2020}, attack on car platoons\cite{jahanshahi2018attack} and identity of attacker \cite{dadras2018identification}. On the other hand, most social data-based research focus on detecting traffic incidents or location using social data (such as Status Update Messages (SUM) or tweets) \cite{paule2019social,andrea2015social}. Very few works have utilized social data for real-time traffic control or estimation. For example, probe vehicle measurements have been used to estimate traffic density in \cite{herrera2010probe,Barreau2020probe}. However, utilization of social data to detect cyber-atacks in transportation systems has remained under-explored. Social data can be highly useful in modeling and real-time management of traffic systems. Especially, with the present rate of engagement in social networking, the amount of social data can be enormous compared to fewer physical data that are restricted by the cost of installation. Most importantly, in case of cyber-attacks in physical part of the infrastructure, social data would provide us real-time redundancies in monitoring traffic states. In \cite{canepa_2013}, spoofing attack detection strategies are proposed using probe-based traffic flow information. This work uses a discretized approximation of a first order traffic model to obtain its attack detection scheme using a mixed-integer linear programming. While \cite{canepa_2013} uses a macroscopic traffic model, it is only first order which fails to capture realistic human driving behavior \cite{daganzo1995requiem}. Moreover, the analysis is based on discretization in time domain which may further add to inaccuracy of the proposed strategy.


In our recent work \cite{Roy_its_2021}, we proposed a diagnostic framework that utilized microscopic platoon model and system theoretic tools along with physical and social data to detect cyber-attacks affecting vehicular communication network. The first limitation of \cite{Roy_its_2021} is the use of microscopic model that requires extensive computation and may be unrealistic in the context of numerous links in the traffic network or for capturing complex traffic dynamics pattern. Second, the cyber-attack considered was limited to that on vehicular communication networks and attack such as complex congestion pattern cannot be captured. Third, only Connected and Autonomous Vehicle (CAV) dynamics is considered while in reality cyber-attack or mis-information on traffic infrastructure can disrupt traffic flow irrespective of autonomous or human-driven or heterogeneous vehicles. Lastly, in our previous framework, no contingency for cyber-attack on social data has been considered.



\subsection{Main contribution}
 This paper simultaneously addresses the aforementioned limitations in our previous work and bridges the research gap in using physical and social data to detect infrastructure or global cyber-attack such as coordinated ramp metering, data spoofing on fixed sensors, adversarial or hijacked vehicle interventions. In other words, the main contribution of this work lies in \textit{design of macroscopic model-based cyber-attack detection scheme for traffic networks that utilizes redundancies between physical and social data.}

In this paper, we intend to detect infrastructure level cyber-attack using a second order continuum or macroscopic traffic model. This not only enables us to characterize global and complex behaviour of traffic, but also help simulate infrastructure level cyber-attacks without the computation burden of a vehicle-level microscopic model. Specifically, our proposed scheme contains two parallel macroscopic traffic model-based Partial Differential Equation (PDE) filters whose output residuals are compared to make decision on attack occurrences. One of the filters utilizes physical sensor data as feedback whereas the other utilizes social data from human users' mobile devices as feedback. The Social Data-based Filter is aided by a fake data isolator and a social signal processor that translates the social information into usable feedback signals. We analyze the convergence properties and design the filter parameters using Lyapunov's stability theory. 

In our previous work \cite{roy_dey2020}, we designed an attack detection filter combining second order macroscopic Aw-Rascle-Zhang (ARZ) traffic model and outlet traffic flux measurement from physical infrastructure. The major distinction between our current work and \cite{roy_dey2020} lies in the following: (i) the current work utilizes both physical and social data while \cite{roy_dey2020} uses only physical data; (ii) the current work has two filters working in parallel while \cite{roy_dey2020} uses only one filter; (iii) the current work utilizes ARZ model with traffic \textit{density and velocity} states while \cite{roy_dey2020} adopted ARZ model with \textit{flux and velocity} states. Such difference in the ARZ model translates to different coupling in the PDE models and slight difference in backstepping transformation used for filter design.

\subsection{Paper organization and notation}
The rest of the paper is organized as follows: Section II discusses the traffic modeling and problem set-up, Section III  discusses the design of the Social Data-based Filter while Section IV introduces the Physical Data-based Filter and Section V discusses the Comparator. Finally, Section VI presents the simulation results followed by conclusion in Section VII.

\textbf{Notation:} The following notations has been used in this work $w_t = {\partial w}/{\partial t}$, $w_x = {\partial w}/{\partial x}$; $\dot{w}(p) = \frac{\diff w}{\diff p}$, $\ddot{w}(p) = \frac{\diff^2 w}{\diff p^2}$; $\left\|w(.) \right\|$ denotes the spatial $\mathcal{L}_2$ norm given as $\left\|w(.) \right\| := \sqrt{\int_0^L w^2(x) dx}$.

\section{Traffic Modeling and Problem Statement}
In this section, we describe the problem set-up in terms of the traffic model, cyber-attack threats in socio-technical traffic system, physical and social data, and cyber-attack detection scheme.

\subsection{Macroscopic traffic model}
In this study, we consider traffic flow on a single lane of a free-way as a one-dimensional spatial evolution along $x\in [0,L]$, where $L$ represents the length of the free-way. In general, traffic flow is characterized by two of the three main variables: density, velocity and flux. In congruence with the data measurement available to our problem, we choose density and velocity as our variables. These variables are defined as follows: (i) \textit{traffic density }is the number of vehicles per unit length of the freeway, and (ii)\textit{ traffic velocity} is given by the mean distance covered per unit time. 

In this framework, traffic  flow is represented by Aw-Rascle-Zhang (ARZ) spacio-temporal macroscopic traffic model where the dynamics of the density $\rho(x,t)$ and velocity $v(x,t)$ of the traffic flow are given as follows \cite{AwRascle}:
\begin{align}\label{macro1}
    & \rho_t +(\rho v)_x = 0,\quad v_t + (v+\dot{V}_{opt}(\rho)\rho) v_x=\frac{V_{opt}(\rho)-v}{T_r},
\end{align}
where $t \in [0,\infty)$ is time and $x \in [0,L]$ is the spatial variable, $V_{opt}(\rho)$ is the optimal velocity function and $T_r$ is the relaxation parameter. The boundary conditions are given by 
\begin{align}\label{BC1}
 &\rho(0,t) = \rho_m + u,\quad \rho(L,t) = \rho^*,
\end{align}
where $\rho_m$ is the maximum density and $u$ is a control input at the inlet ramp metering on the free-way. Here $\rho^*$ is constant nominal density maintained at the outlet of the freeway using loop detector measurement. 

To facilitate our analysis, we first linearize the ARZ model \eqref{macro1} around nominal operating point $(\rho^*,v^*)$ to obtain 
\begin{align}\label{lin1}
    &\rho_t+ v^*\rho_x+\rho^*v_x =0,\\\label{lin2}
    &v_t+[v^*+\rho^*\dot{V}_{opt}(\rho^*)]v_x =\frac{\dot{V}_{opt}(\rho^*)\rho-v}{T_r}.
\end{align}
Subsequently, to decouple this linearized system, we perform the following transformation \cite{costeseque2015lax}
\begin{align}
    w:=v-\rho \dot{V}_{opt}(\rho^*),
\end{align}
to obtain the transformed model in $w-v$ domain:
\begin{align}\label{sys1}
    &w_t + k_1 w_x =-k_2w,\quad
    v_t + k_3 v_x = -k_2w,
\end{align}
where $k_1 = v^*, k_2=\frac{1}{T_r}$ and $k_3=[v^*+\rho^*\dot{V}_{opt}(\rho^*)]$. Under attack, this model is modified as
\begin{align}\label{sys11}
    &w_t + k_1 w_x =-k_2w,\quad v_t + k_3 v_x = -k_2w+\delta_1,\\ \label{bc1122}
    & w(0,t)= v(0,t) -(\rho_m+u+d_2-\rho^*)\dot{V}_{opt}(\rho^*),\\\label{bc11}
    & w(L,t) = v(L,t),
\end{align}
where $\delta_1(x,t)$ is the  in-domain distributed attack while $\delta(t)$ represents the inlet boundary attack. Without loss of generality, we can assume here $u=\rho^*-\rho_m$ and define $\delta_2:=-d_2/\dot{V}_{opt}(\rho^*)$ which modifies \eqref{bc1122} as
\begin{align}\label{bc22}
    & w(0,t)= v(0,t) +\delta_2.
\end{align}
We also note here that attacks at the outlet boundary are not considered as it is be trivial to analyze with our assumed outlet boundary measurement.

\subsection{Cyber-attack threats on socio-technical traffic system}

The traffic model, described in \eqref{sys11} and \eqref{bc11}-\eqref{bc22}, captures two broad categories of cyber-attacks: namely in-domain and boundary. The in-domain cyber-attack $\delta_1(x,t)$ is a distributed attack on the traffic system and models attacks on communication layers of Connected Autonomous Vehicles, spoofing of vehicle navigation systems or hijacked rogue vehicles. The in-domain attack can also model cyber-attacks on central management system since the latter can effect large scale traffic disruptions by compromising wide range of traffic management devices on the freeway. These attacks can also be on the network layer, spoofing data to and from the central management system \cite{REILLY2016}.

On the other hand, the inlet cyber-attack $\delta_2(t)$ models attacks on traffic control infrastructure such as ramp-metering or loop detectors. These attack can be orchestrated by hacking into the software of these devices/sensors/control boxes or the central management system. 

Additionally, in a  socio-technical traffic system, cyber-attacks may arise from corrupted social data. Such attacks can be initiated through generation of fake tweets or messages and spams. Utilization of social signals for traffic management system is thus endangered by such cyber-attacks. Furthermore, corrupted social data may include malware or phishing links that can jeopardise the operation of central management systems. These can also lead to false panic scenarios and severely impact traffic management systems.

\subsection{Physical and social data}
The measurement of the outlet traffic flow $v(L,t)$ is used as a physical measurement data from the infrastructure sensors. Concurrently, we also consider $N$ social-agents or human-in-the-loop, who are travelling in cars on this freeway and are transmitting social data such as geo-tags, GPS coordinates from portable devices, messages, Status Message Updates (SUMs) or tweets. When any car begins transmitting positional information through social data, it not only provides us with the position knowledge of these vehicles but also provides us an unique access to information of traffic at these in-domain points. 

Note that the social data contains vehicle-level (that is, microscopic) information whereas the model variables are global traffic quantities such as density, flow velocity and flux. In order to translate the microscopic data to macroscopic variables, we leverage the connection between the macroscopic ARZ model and microscopic car-following model \cite{AwRascle}. This motivates us to identify social-data generating segments of the macroscopic traffic as microscopic. Subsequently, we define the dynamics of each social-data transmitting vehicle using a modified car-following model \cite{AwRascle}
\begin{align}\label{micro}
    \ddot{z}_i(t) =& C_{\gamma}\frac{\dot{b}_{i}(t)-\dot{z}_i(t)}{(b_{i}(t)-z_i(t))^{\gamma+1}}+\frac{1}{T_r}\left[V_{opt}\left(\frac{\Delta X}{b_i(t)-z_i(t)}\right)-\Dot{z}_i(t)\right],
\end{align}
where $z_i, \forall i\in \{1,\hdots,N\}$ represent the position of the $i$-th car transmitting social data, $\Delta X$ is the length of the car and $b_i$ is the position of the micro-macro interface. The interface $b_i$ here acts analogous to  preceding car in the car-following model. The constant parameter $C_{\gamma}=v^*(\Delta X)^\gamma$ where $\gamma$ is a non-negative parameter.

Since we obtain positional information from the social data of the $i$-th social agents, we assume that the position data of their vehicle $z_i$ is known. The optimal velocity function $V_{opt}$ is monotonic and bounded. This implies that it is invertible. Thus, the position of the interface $b_i$ can be computed from \eqref{micro} by solving a nonlinear differential equation of the form $\dot{b}_i=h(b_i, z_i, \dot{z}_i, \ddot{z}_i).$
This enables us to define local density as
\begin{align}\label{micro2}
    \rho_i = \Delta X/(b_i-z_i)\approx \rho(z_i,t).
\end{align}
These $\rho(z_i,t)$ then serve as measurements for the Social Data-based Filter.

\subsection{Adversarial attack detection approach}

Based on the above setup, the cyber-attack detection scheme is shown in Fig. \ref{scheme}. The scheme contains \textit{Social Data-based Filter}, \textit{Physical Data-based Filter} and \textit{Comparator}. The \textit{Social Data-based Filter} utilizes $\rho(z_i,t)$ as feedback whereas the \textit{Physical Data-based Filter} utilizes $v(L,t)$ for employing output injection. Both of these filters output residual signals. These residual signals are processed by the \textit{Comparator} to produce a decision of attack detection.  In the following subsections, we will discuss the Filters and the Comparator.

\begin{figure}[tb]
      \centering
      \includegraphics[width=0.7\textwidth]{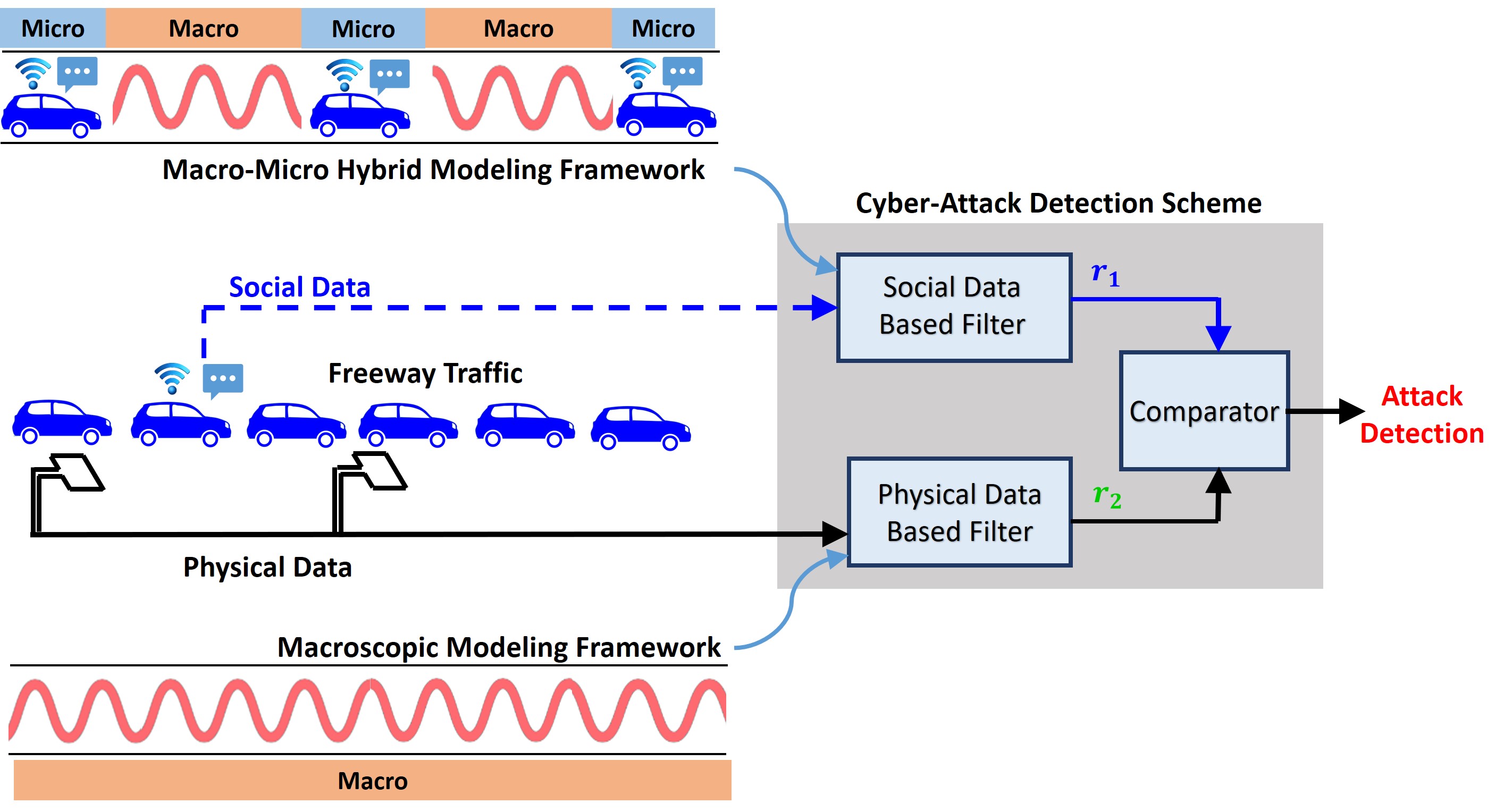}
      \caption{Cyber-attack detection scheme using social and physical data on distributed traffic model.}
      \label{scheme}
\end{figure}

\section{Social Data-based Filter}

Online Social Networking (OSN) has grown extensively over the recent years. Among various providers of OSN, Twitter alone provides networking platform to roughly one-fourth of US adult population of which 42\%  are on the platform daily\footnote{https://www.omnicoreagency.com/twitter-statistics/}. Furthermore, about 40\% of these Status Update Messages (SUMs) or tweets contain information such as geo-tags and geographical location information \cite{paule2019fine}. This suggests that the data from OSN can be leveraged successfully as social data signal in the context of traffic systems.

To achieve the aforementioned objective, we propose a Social Data-based Filter (see Fig. \ref{SDF}). This filter consists of three separate stages: Stage I is Fake Social Data Isolator (FSDI), Stage II is Social Signal Processor (SSP) and finally Stage III consists of Social Signal Residual Generator (SSRG). These three stages work in tandem to ensure high quality prediction from the Social Data-based Filter.

\begin{figure}[h!]
      \centering
      \includegraphics[scale=0.25]{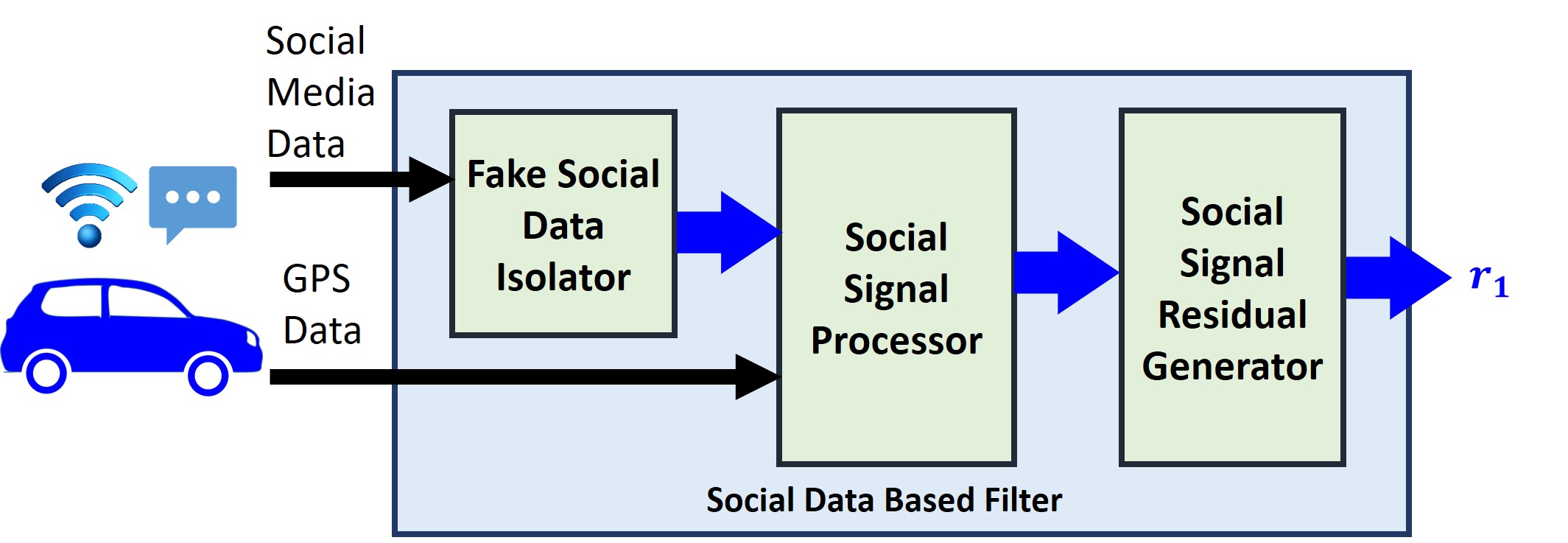}
      \caption{Social Data-based Filter.}
      \label{SDF}
\end{figure} 

The purpose of the three stages are of gate-keeping, pre-processing and social signal residual generation. Here the Stage I Fake Social Data Isolator detects cyber-attack on the social signals by classifying fake or illegitimate social data. This stage then passes legitimate signals onto Stage II. At this stage, the legitimate social signals undergo pre-processing in terms of acquiring meta-data from geo-tags or obtaining location information from known landmarks in the target area. Once this position information is received, it is send to Stage III. In case of GPS data, Stage II directly passes this to Stage III. Finally at Stage III, the Social Signal Residual Generator utilizes these position measurements to generate the social signal residual. This residual is then send to the Comparator block (along with physical signal residual) in order to obtain an attack decision.

\subsection{Fake Social Data Isolator (FSDI)}
 With vast social connectivity, Twitter and other OSN platforms have become a thriving ground for spamming accounts or tweets, bots, fake news, ad spam or phishing links, linkbait-and-switch, malware  or ``too-good-to-be-true" spams. For instance, Twitter systems flagged 3.2 million spamming accounts on their platform on 2018.  This signifies the importance of isolating fake tweets or spam from the credible social signals in order to retrieve meaningful traffic information from these social data. In other words, we have to ensure that the social data feedback is credible. 

In this work, we isolated credible social data from fake or spam tweets by implementing a Long-Short Term Memory (LSTM) recurrent neural network (RNN) \cite{LSTM}. Like any standard recurrent network, the LSTM-RNN has the ability to derive long-term dependency information from the input data. This enables the neural network model to capture contextual information from the social signal and effectively distinguish between fake and genuine ones. A standard LSTM recurrent neuron has the following gates: (i) input gate, (ii) candidate gate, (iii) forget gate and (iv) output gate (Fig. \ref{fig:LSTM}). 

The network designed for our problem has 2 hidden layers containing 48 and 24 neurons successively. We use sigmoid activation function to evaluate the classifier output. The input to LSTM network, if needed, is padded with zeros or truncated to standardize an input length and the size of the input layer for LSTM is equal to the size of the vocabulary obtained from the tokenizer.  We set our learning rate to be $0.01$ and binary cross-entropy as the loss function. A drop-out of $0.4$ and batch regularization are chosen to avoid over-fitting of the data.

\begin{figure}
    \centering
    \includegraphics[scale=0.25,width=0.5\textwidth]{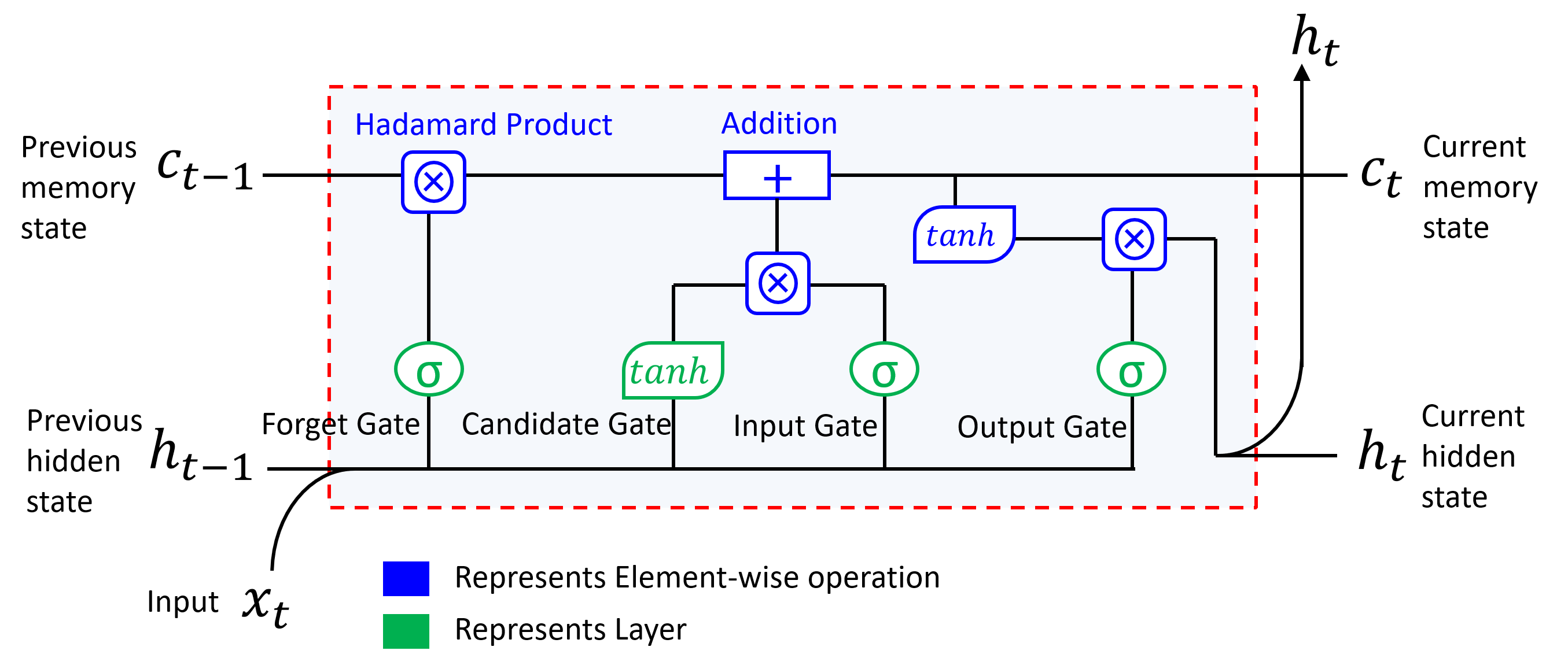}
    \caption{The structure of a single LSTM cell.}
    \label{fig:LSTM}
\end{figure}

In terms of data, we have generated set of artificial tweets. For our problem, real life fake data could not be gathered due to lack of available dataset. In order to show the proof-of-concept for the scheme, we generated synthetic data containing legitimate tweets resembling real tweets mentioning landmark locations. Furthermore, fake/corrupted data were generated by adding or manipulating these tweet data in the following way: (i) addition of spam links, (ii) inclusion of ``too-good-to-be-true" claims, (iii) mention of fake or out-of-scope landmarks and (iv) fake tweets (such as fake restaurant promotions, false donation promises in exchange of re-tweets etc). 
It should be noted here that classification of credible vs illegitimate tweets has been attempted in literature using content-based and account-based strategies\cite{Coulter_fake_tweets}. Here, we have restricted ourselves to content-based classification problem as the account-based study is beyond the scope of this work.
The isolated credible social data can then be utilized by the Social Signal Processor in Stage II.

\subsection{Social Signal Processor (SSP)}

We utilize the Social Signal Processor introduced in our previous work \cite{Roy_its_2021} to generate positions or trajectories information of vehicles in the traffic. The processor first collects social data in the form of meta-data such as geo-tags or location data from tweets as well as GPS signals from the mobile devices (when turned ON). Thereafter, position information from GPS signals or geo-tags are passed on to SSRG as is. Simultaneously, this processor also ensures that the collected tweets are \textit{relevant}. The \textit{relevant} texts  are defined as those which provide information about the \textit{present} location of a user (connected to the vehicle) while \textit{irrelevant} texts do not. For example, among the following tweets -- `` I'm near Pho 11 Vietnamese Restaurant", ``\#Beef \#Pho @Pho 11 Vietnamese Restaurant is terrific!" and `` Last night party at pho 11 was a blast" -- only the first is \textit{relevant} as it can be used to extract present position data of the vehicle. This filtering is achieved using standard Natural Language Processing (NLP) techniques \cite{Roy_its_2021}. 

Thus, using the knowledge of position of landmarks along the traffic of interest and the GPS or geo-tags signals of the social data from the vehicles, the positions or trajectories of the vehicles can be successfully obtained from the SSP. Once position is obtained, it is used to estimate the density of the traffic in those positions or along those trajectories.  This estimates are eventually passed as a feedback to the Social Signal Residual Generator  in Stage III.

\subsection{Social Signal Residual Generator (SSRG)}

Position measurement from the SSP is subsequently used to generate the local density using \eqref{micro}-\eqref{micro2}. This density measurements is used by the Social Signal Residual Generator (SSRG) and are given by
\begin{align}
&y_i(t) = \rho(z_i,t), \forall i={1,2,\hdots,N},
\end{align}
where $N$ represents the number of social-agents or human-in-the-loop transmitting social data. In transformed variables, this can be expressed as
\begin{align}
&y_i(t)= \frac{v(z_i,t)-w(z_i,t)}{\dot{V}_{opt}(\rho^*)}, \forall i={1,2,\hdots,N}.
\end{align}
These $N$ output can also be represented as a vector in the following form \cite{Wang_distributed}:
\begin{align}
    y_s = \int_0^L c(x)\frac{{v}(x,t)-{w}(x,t)}{\dot{V}_{opt}(\rho^*)}\diff x,
\end{align}
where $c(x) = [c_1(x), c_2(x), \hdots, c_N(x)]^T$ and $c_i(x)$ is defined in terms of dirac delta functions $c_i(x)= \delta(x-z_i), \forall i={1,2,\hdots,N}$. Next, the dynamics of the model-based SSRG is given by:
\begin{align}\label{social1}
&\widetilde{w}_t=-k_1\widetilde{w}_x-k_2{\widetilde{w}}+ \alpha \dot{V}_{opt}(\rho^*) c(x)^T(y_s-\widetilde{y}_s),\\\label{social2}
&\widetilde{v}_t=-k_2\widetilde{v}_x-k_2{\widetilde{w}}+\beta \dot{V}_{opt}(\rho^*) c(x)^T(y_s-\widetilde{y}_s),\\\label{social3}
&\widetilde{w}(0,t)= \widetilde{v}(0,t)=v(0,t) ,\quad  \widetilde{w}(L,t) = \widetilde{v}(L,t),\\
&\widetilde{y}_s=\int_0^L c(x)\frac{\widetilde{v}(x,t)-\widetilde{w}(x,t)}{\dot{V}_{opt}(\rho^*)}\diff x , \forall i={1,2,\hdots,N},
\end{align}
where $\widetilde{w}, \widetilde{v}$ and $\widetilde{y}_s$ represent the estimates of $w, v$ and $y_s$, respectively; $\alpha = \text{diag}(\alpha_1,\alpha_2, \hdots, \alpha_N)$,  $\beta = \text{diag}(\beta_1,\beta_2, \hdots, \beta_N)$ and $\alpha_i,\beta_i, \forall i$ are the filter gains. Here we assume that the inlet velocity $v(0,t)$ is measured and known and is thus used as the boundary condition for the filter as well. 

Thus, the errors for the SSRG can be defined by  ${\widetilde{W}:=w-\widetilde{w}}$ and $\widetilde{V}:=v-\widetilde{v}$ and their dynamics is given by the difference of \eqref{social1}-\eqref{social3} and \eqref{sys11},\eqref{bc11}-\eqref{bc22}:
\begin{align}\label{soc_err1}
&\widetilde{W}_t=-k_1\widetilde{W}_x-k_2{\widetilde{W}}-\alpha \dot{V}_{opt}(\rho^*)c(x)^T(y_s-\widetilde{y}_s),\\
&\widetilde{V}_t=-k_3\widetilde{V}_x-k_2{\widetilde{W}}-\beta \dot{V}_{opt}(\rho^*)c(x)^T(y_s-\widetilde{y}_s)+\delta_1,\\\label{soc_err2}
&\widetilde{W}(0,t)= \delta_2,\quad  \widetilde{W}(L,t) = \widetilde{V}(L,t).
\end{align}
The residual for the Social Data-based Filter is given by
\begin{align}\label{residual_social}
& r_s(t) = \sum_{i} \rho^2(z_i,t)\dot{V}^2_{opt}(\rho^*) = \sum_{i} \left(\widetilde{V}(z_i,t)-\widetilde{W}(z_i,t)\right)^2,
\end{align}
where the dynamics of $z_i$ is given by \eqref{micro}. With this we present our first theorem that provide the design conditions for filter gains $\alpha$ and $\beta$ of the Social Data-based Filter.

\begin{thmm}[Convergence of Residual Dynamics for Social Data-based Filter]
Consider the error dynamics \eqref{soc_err1}-\eqref{soc_err2}, residual definition \eqref{residual_social} and the dynamics of social data sensors \eqref{micro}. The residual signal is asymptotically stable in the following sense:
\begin{align}\label{thm11}
    & r_s(t)\leqslant c_1 r_s(0)\exp{(-\lambda_s t)}, 
\end{align}
 for some $c_1>0$ and $\lambda_s >0$ without any attack. Furthermore, the residual signal is input-to-state state stable in the following sense:
 \begin{align}\label{thm12}
    & r_s(t)\leqslant c_1 r_s(0)\exp{(-\lambda_s t)}+c_2\left( \|\delta_1(.,t)\|^2+\delta_2^2(t)\right), 
\end{align}
 for some $c_2>0$ under attack. These conditions can be guaranteed if there exists a positive scalar $\lambda_s$ such that LMI condition $(\mathcal{P}+\lambda_s I)<0$ is satisfied and matrix $\mathcal{P}$ is such that 
 \begin{align}
     \overline{\sigma}(\mathcal{P}_i)\leqslant \overline{\sigma}(\mathcal{P}), \, \forall i
 \end{align}
 where $\overline{\sigma}$ represents the maximum singular value and matrix $\mathcal{P}_i$ is given as follows:
    \begin{align}\label{P_mat}
    &\mathcal{P}_i=\frac{1}{\Tilde{x}_i-\Tilde{x}_{i+1}}\begin{bmatrix}
     \alpha_i &-\frac{1}{2}(\alpha_i-\beta_i) \\
    -\frac{1}{2}(\alpha_i-\beta_i) & -\beta_i
    \end{bmatrix},
\end{align}
where for all $i=1, \hdots, N$, the interval $(\Tilde{x}_{i+1}, \Tilde{x}_{i})$ contains the position $z_i$ of the $i$th vehicle. 
\end{thmm}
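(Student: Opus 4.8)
The plan is to run a Lyapunov argument directly on the coupled transport error system \eqref{soc_err1}--\eqref{soc_err2}. I would take the candidate $\mathcal{V}(t)=\tfrac{1}{2}\int_0^L\bigl(\widetilde{W}^2+\widetilde{V}^2\bigr)\diff x$ (optionally weighted by $e^{-\mu x}$ to sharpen the convective dissipation) and differentiate along trajectories. Integrating the transport terms $-k_1\widetilde{W}_x$ and $-k_3\widetilde{V}_x$ by parts produces boundary contributions at $x=0$ and $x=L$; the reaction terms contribute a distributed indefinite quadratic form in $(\widetilde{W},\widetilde{V})$; and the Dirac kernels $c_i(x)=\delta(x-z_i)$ collapse the output-injection terms onto the sensor locations. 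Since the $i$th component of $y_s-\widetilde{y}_s$ equals $\bigl(\widetilde{V}(z_i,t)-\widetilde{W}(z_i,t)\bigr)/\dot{V}_{opt}(\rho^*)$, the injection contribution reduces exactly to $\sum_i\xi_i^{T}M_i\xi_i$, where $\xi_i:=[\widetilde{W}(z_i,t),\,\widetilde{V}(z_i,t)]^{T}$ and $M_i$ is the numerator of the block $\mathcal{P}_i$ in \eqref{P_mat}.

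The pivotal step is to recast this pointwise sum as the distributed dissipation dictated by \eqref{P_mat}. Interpreting $\xi$ as piecewise constant on the partition $\{(\Tilde{x}_{i+1},\Tilde{x}_i)\}_i$ of $[0,L]$ --- each cell of length $\Tilde{x}_i-\Tilde{x}_{i+1}$ carrying the node $z_i$ --- gives $\xi_i^{T}M_i\xi_i=\int_{\Tilde{x}_{i+1}}^{\Tilde{x}_i}\xi^{T}\mathcal{P}_i\xi\,\diff x$, precisely because $\mathcal{P}_i=M_i/(\Tilde{x}_i-\Tilde{x}_{i+1})$. Summing over $i$ turns the injection into the single integral $\int_0^L\xi^{T}\mathcal{P}(x)\xi\,\diff x$ with $\mathcal{P}(x)\equiv\mathcal{P}_i$ on the $i$th cell; the reciprocal-length factor in \eqref{P_mat} is exactly what makes this accounting consistent. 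The singular-value domination $\overline{\sigma}(\mathcal{P}_i)\leqslant\overline{\sigma}(\mathcal{P})$ then lets me replace the cellwise $\mathcal{P}(x)$ by the single matrix $\mathcal{P}$, and the LMI $(\mathcal{P}+\lambda_s I)<0$ yields $\int_0^L\xi^{T}\mathcal{P}(x)\xi\,\diff x\leqslant-2\lambda_s\mathcal{V}$, supplying the margin needed to absorb the distributed reaction term.

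I would then close the boundary terms using $\widetilde{V}(0,t)=0$ (inherited from the filter boundary $\widetilde{v}(0,t)=v(0,t)$ in \eqref{social3}), $\widetilde{W}(0,t)=\delta_2$, and $\widetilde{W}(L,t)=\widetilde{V}(L,t)$: the outlet contributes $-\tfrac12(k_1+k_3)\widetilde{V}^2(L,t)\leqslant0$, which is discarded, while the inlet leaves only $\tfrac{k_1}{2}\delta_2^2$. The indefinite reaction cross-term $-k_2\int_0^L\widetilde{W}\widetilde{V}\,\diff x$ and the in-domain attack $\int_0^L\widetilde{V}\delta_1\,\diff x$ are handled by (weighted) Young's inequality --- the former dominated by the injection dissipation, which is why $\beta_i>0$ is required so that $\mathcal{P}$ is sufficiently negative in the $\widetilde{V}$ block, and the latter producing a $\tfrac12\|\delta_1(.,t)\|^2$ forcing term. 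Collecting everything gives $\dot{\mathcal{V}}\leqslant-\lambda_s\mathcal{V}+c\bigl(\|\delta_1(.,t)\|^2+\delta_2^2(t)\bigr)$, and the comparison lemma yields \eqref{thm12}, with the attack-free estimate \eqref{thm11} as its specialization. The residual bound follows because, under the same piecewise-constant accounting, $r_s(t)=\sum_i\bigl(\widetilde{V}(z_i,t)-\widetilde{W}(z_i,t)\bigr)^2$ is equivalent, up to the cell lengths, to $\|\widetilde{V}-\widetilde{W}\|^2\leqslant4\mathcal{V}$, so the exponential decay of $\mathcal{V}$ transfers to $r_s$ and fixes $c_1,c_2$.

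I expect the principal obstacle to be making the Dirac-delta feedback rigorous. The pointwise injection renders $\widetilde{W}$ and $\widetilde{V}$ discontinuous at each $z_i$, so the nodal values feeding both $\xi_i^{T}M_i\xi_i$ and the residual must be assigned a consistent meaning (e.g.\ an averaged trace), and the piecewise-constant identification that converts $\sum_i\xi_i^{T}M_i\xi_i$ into $\int_0^L\xi^{T}\mathcal{P}(x)\xi\,\diff x$ is an approximation whose error must be controlled rather than an exact equality. Closely tied to this is the passage from $\overline{\sigma}(\mathcal{P}_i)\leqslant\overline{\sigma}(\mathcal{P})$ together with $(\mathcal{P}+\lambda_s I)<0$ to the cellwise bound $\xi^{T}\mathcal{P}_i\xi\leqslant-\lambda_s|\xi|^2$: this genuinely requires the sign/definiteness structure of the $2\times2$ blocks enforced by the gain choices $\alpha_i,\beta_i$, and does not follow from a maximum-singular-value comparison alone.
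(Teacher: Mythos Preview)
Your overall Lyapunov program coincides with the paper's: the weighted energy $\mathcal{E}(t)=\tfrac{1}{2}\int_0^L e^{-x}\bigl(\widetilde{W}^2+\widetilde{V}^2\bigr)\diff x$, integration by parts on the transport terms, Young's inequality on $\int e^{-x}\widetilde{V}\delta_1$, and Gr\"onwall to close. Two points of divergence matter. First, the exponential weight is not optional. The paper packages the convective and reactive contributions into the constant matrix
\[
\mathcal{Q}=\begin{bmatrix}-(\tfrac{k_1}{2}+k_2)&-\tfrac{k_2}{2}\\[2pt]-\tfrac{k_2}{2}&-\tfrac{k_3-\gamma}{2}\end{bmatrix}<0,
\]
whose diagonal entries arise precisely from $e^{-x}$ interacting with $\partial_x$; without the weight the reaction cross-term $-k_2\!\int\widetilde{W}\widetilde{V}$ cannot be made sign-definite independently of the injection. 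The paper then uses $\mathcal{P}_i+\mathcal{Q}<\mathcal{P}_i$ to discard $\mathcal{Q}$, rather than asking the injection to dominate the reaction as you propose.

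Second, the obstacle you correctly identify in your last paragraph is exactly where the paper takes a different device. Instead of a piecewise-constant identification (which is only approximate), the paper invokes the mean value theorem for integrals: for each $z_i$ it asserts the \emph{existence} of an interval $(\Tilde{x}_{i+1},\Tilde{x}_i)\ni z_i$ on which the pointwise quantity $e^{-z_i}\bigl[\alpha_i\widetilde{W}^2(z_i)-\beta_i\widetilde{V}^2(z_i)+(\beta_i-\alpha_i)\widetilde{V}\widetilde{W}(z_i)\bigr]$ equals the cell average $\frac{1}{\Tilde{x}_i-\Tilde{x}_{i+1}}\int_{\Tilde{x}_{i+1}}^{\Tilde{x}_i}e^{-x}[\cdots]\,\diff x$ \emph{exactly}. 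The intervals are thus existence objects (depending on $t$), not an a~priori partition, and this is how the $1/(\Tilde{x}_i-\Tilde{x}_{i+1})$ factor in $\mathcal{P}_i$ enters without an approximation error. One further gap: you do not address the reverse inequality $\mathcal{E}(0)\leqslant m_3\,r_s(0)$ needed to express the bound in terms of $r_s(0)$; the paper obtains it (somewhat crudely) from bounded initial errors with $m_3=L\max_x(|\widetilde{W}(x,0)|+|\widetilde{V}(x,0)|)^2/\min_x\{\widetilde{W}^2(x,0),\widetilde{V}^2(x,0)\}$, while your ``equivalence up to cell lengths'' only gives the forward direction $r_s\lesssim\mathcal{E}$.
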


\begin{proof}
Consider the following Lyapunov candidate functional:
\begin{align}
    &\mathcal{E}(t)=\mathcal{E}_1(t)+\mathcal{E}_2(t),
    \end{align}
    where
    $\mathcal{E}_1{=}\int_{0}^L \frac{e^{-x}}{2}\widetilde{W}^2(x,.)\diff x$ and $\mathcal{E}_2{=}\int_{0}^L \frac{e^{-x}}{2}\widetilde{V}^2(x,.)\diff x$.
Taking derivative of $\mathcal{E}_1(t)$ and $\mathcal{E}_2(t)$ with respect to time and substituting \eqref{soc_err1}-\eqref{soc_err2} we obtain:
\begin{align}
    \dot{\mathcal{E}}_1(t)=&  -\frac{k_1}{2}[e^{-L}\widetilde{W}^2(L)-\widetilde{W}^2(0)]-\left(k_2+\frac{k_1}{2}\right)\int_0^Le^{-x}\widetilde{W}^2-\sum\limits_{i=1}^N\alpha_ie^{-z_i}\left[-\widetilde{W}^2(z_i)+\widetilde{V}(z_i)\widetilde{W}(z_i)\right] ,\\
    \dot{\mathcal{E}}_2(t)=&  -\frac{k_3}{2}[e^{-L}\widetilde{V}^2(L)]-\frac{k_3}{2}\int_0^Le^{-x}\widetilde{V}^2-\frac{k_2}{2}\int_0^L\!\!\!e^{-x}\widetilde{W}\widetilde{V} -\sum\limits_{i=1}^N\beta_ie^{-z_i}\left[\widetilde{V}^2(z_i)-\widetilde{V}(z_i)\widetilde{W}(z_i)\right]+\int_0^Le^{-x}\widetilde{V}\delta_1.
\end{align}
Now using mean value theorem, we can assert that for each position $z_i$ of the $i$th vehicle we can find an interval  $[\Tilde{x}_{i+1},\Tilde{x}_i]\subset [0,L], \forall i \in \{1,\hdots,N\}$ such that  \cite{collocated_lyapunov_proof}
\begin{align}\label{mvt}
    &e^{-z_i}\left[\alpha_i\widetilde{W}^2(z_i)-\beta_i\widetilde{V}^2(z_i)+(\beta_i-\alpha_i)\widetilde{V}(z_i)\widetilde{W}(z_i)\right]
    =\sum\limits_{i=1}^N\frac{1}{\Tilde{x}_i-\Tilde{x}_{i+1}}\int\limits^{\Tilde{x}_i}_{\Tilde{x}_{i+1}}e^{-x}\left[\alpha_i\widetilde{W}^2-\beta_i\widetilde{V}^2+(\beta_i-\alpha_i)\widetilde{V}\widetilde{W}\right].
\end{align}
We note here that $0<\Tilde{x}_{N+1}<\hdots<\Tilde{x}_1<L$. Let us define two more points $\Tilde{x}_{N+2}:=0$ and $\Tilde{x}_0=L$. This implies any integral over the domain $[0,L]$ can be written as follows:
\begin{align}\label{sum_int}
    \int\limits_0^L f(x)\diff x = \sum\limits_{i=0}^{N+1} \int\limits^{\Tilde{x}_i}_{\Tilde{x}_{i+1}}f(x)\diff x.
\end{align}
Next let us define a positive constant $\gamma$ such that $0<\gamma<k_3$. Then, using \eqref{mvt}, \eqref{sum_int} and Young's Inequality in $ \dot{\mathcal{E}}_1(t)$ and $ \dot{\mathcal{E}}_2(t)$ yields
\begin{align}
     \dot{\mathcal{E}}(t)\leqslant &\frac{k_1}{2}\widetilde{W}^2(0)-\sum\limits_{i=0}^{N+1} \int\limits^{\Tilde{x}_i}_{\Tilde{x}_{i+1}}e^{-x}\Bigg[\left(k_2+\frac{k_1}{2}\right)\widetilde{W}^2
     +\left(\frac{k_3-\gamma}{2}\right)\widetilde{V}^2-k_2\widetilde{W}\widetilde{V}\Bigg]\diff x\nonumber\\ &+\sum\limits_{i=1}^N\frac{1}{\Tilde{x}_i-\Tilde{x}_{i+1}}\int\limits^{\Tilde{x_i}}_{\Tilde{x}_{i+1}}e^{-x}\bigg[\alpha_i\widetilde{W}^2-\beta_i\widetilde{V}^2+(\beta_i-\alpha_i)\widetilde{V}\widetilde{W}\bigg]\diff x+\frac{1}{2\gamma}\int\limits_0^L\delta^2_1(x,t)\diff x.
\end{align}
We also note here that $\widetilde{W}(0)=\delta_2$. Next we define a vector $\zeta = e^{-x/2}[\widetilde{W}, \widetilde{V}]^T$, matrix $\mathcal{P}_i$ as in \eqref{P_mat}, and a negative definite matrix $\mathcal{Q}$ as follows
\begin{align}
    \mathcal{Q}:= \begin{bmatrix}
    -\left(\frac{k_1}{2}+k_2\right) & -\frac{k_2}{2}\\
    -\frac{k_2}{2} &-\frac{k_3-\gamma}{2}
    \end{bmatrix}<0.
\end{align}
This implies that 
\begin{align}
     \dot{\mathcal{E}}(t)\leqslant & \sum\limits_{i=1}^N\int\limits^{\Tilde{x}_i}_{\Tilde{x}_{i+1}} \zeta^T(\mathcal{P}_i+\mathcal{Q})\zeta + \sum\limits_{i=0,N+1}\int\limits^{\Tilde{x}_i}_{\Tilde{x}_{i+1}}\zeta^T\mathcal{Q}\zeta+\frac{1}{2\gamma}\|\delta_1(.,t)\|^2+\frac{k_1}{2}\delta_2^2.
\end{align}
As $\mathcal{Q}$ is negative definite, $\mathcal{P}_i+\mathcal{Q}<\mathcal{P}_i$ which yields
\begin{align}
     \dot{\mathcal{E}}(t)\leqslant & \sum\limits_{i=0}^{N+1}\int\limits^{\Tilde{x}_i}_{\Tilde{x}_{i+1}}\zeta^T\mathcal{P}_i\zeta+\frac{1}{2\gamma}\|\delta_1(.,t)\|^2+\frac{k_1}{2}\delta_2^2\leqslant\int\limits_0^L\zeta^T\mathcal{P}\zeta+\frac{1}{2\gamma}\|\delta_1(.,t)\|^2+\frac{k_1}{2}\delta_2^2.
\end{align}
where $\mathcal{P}$ is chosen such that the maximum singular value of $\mathcal{P}$ is greater than the maximum singular value of $\mathcal{P}_i, \forall i.$ i.e. $\overline{\sigma}(\mathcal{P}_i)\leqslant \overline{\sigma}(\mathcal{P}), \, \forall i.$ Next, we can choose a constant $\lambda_s$ such that it satisfies the LMI $(\mathcal{P}+\lambda_s\mathbf{I})<0$. This in turn yields
\begin{align}\label{final_edot}
     \dot{\mathcal{E}}(t)\leqslant -\lambda_s \mathcal{E}(t)+\frac{1}{2\gamma}\|\delta_1(.,t)\|^2+ \frac{k_1}{2}\delta_2^2(t).
\end{align}
Consequently, using Gr\"{o}nwall's inequality we can write \cite{grimshaw1991nonlinear}
\begin{align}
    \mathcal{E}(t)\leqslant \mathcal{E}(0)e^{-\lambda_s t}+m_1\sup_t\left( \|\delta_1(.,t)\|^2+\delta_2^2(t)\right),
\end{align}
where $m_1 = \max(1,k_1\gamma)/(2\gamma\lambda_s)$.

Next, we attempt to prove two inequalities: (I) $m_2 r_s(t) \leqslant \mathcal{E}(t)$  and (II)  $\mathcal{E}(0)\leqslant m_3 r_s(0)$ for $m_2, m_3>0$. Using Young's Inequality one more time in \eqref{residual_social} we obtain
\begin{align}
    r_s(t) \leqslant m_4\sum_i [\widetilde{W}^2(z_i,t)+\widetilde{V}^2(z_i,t)],
\end{align}
for some $m_4=\max\{(1+\frac{1}{m_5}),(1+m_5)\}$ and $m_5>0$. Subsequently, it is trivial to note that $\sum_i[\widetilde{W}^2(z_i,t)+\widetilde{V}^2(z_i,t)]\leqslant 2 e^L \mathcal{E}(t)$. This yields our required inequality (I) for $m_2 = e^{-L}/(2m_4)$. 

To prove the next inequality (II), we consider bounded initial conditions for the error system i.e. $0<|\widetilde{W}(x,0)|, |\widetilde{V}(x,0)|< \infty$. Using this assumption we can obtain (II) where $m_3 = \frac{L\max_x\{(\left|\widetilde{W}(x,0)\right|+\left|\widetilde{V}(x,0)\right|)^2\}}{\min_x\{\widetilde{W}^2(x,0),\widetilde{V}^2(x,0)\}}$. Finally, choosing $c_1 = m_3/m_2$ and $c_2 = m_1/m_2$, we obtain \eqref{thm12}. Moreover, when there is no attack i.e. $\delta_1\equiv 0$ and $\delta_2\equiv 0$, we can obtain \eqref{thm11}.
\end{proof}

\section{Physical Data-Based Filter}

Unlike the Social Data-based Filter, the Physical Data-based Filter obtains its measurement from the outlet of the traffic (at $x=L$) in the form of flow measurement and is given by
\begin{align}
&y_p(t) = v(L,t).
\end{align}

As before, using the system model \eqref{sys11}, \eqref{bc11}-\eqref{bc22} and output injection, we can write the model for the Physical Data-based Filter to be:
\begin{align}\label{physical1}
&\hat{w}_t=-k_1\hat {w}_x-k_2{\hat {w}}+\gamma_1(x)(y_p-\hat {y}_p),\\
&\hat {v}_t=-k_3\hat {v}_x-k_2{\hat {w}}+ \gamma_2(x)(y_p-\hat {y}_p),\\\label{physical2}
&\hat{w}(0,t) =v(0,t), \quad \hat{w}(L,t)=\hat{v}(L,t),\\
&\hat {y}_p=\hat{v}(L,t),
\end{align}
where $\hat{w}, \hat{v}$ and $\hat{y}_p$ represent the estimates of $w, v$ and $y_p$ respectively obtained from the Physical Data-based Filter, and $\gamma_1(x)$ and $\gamma_2(x)$ are filter gains. The errors for the Physical Data-based Filter is then defined as  $\widehat{W}:=w-\widehat{w}$ and $\widehat{V}:=v-\widehat{v}$ and their dynamics is given by the difference of \eqref{physical1}-\eqref{physical2} and \eqref{sys11},\eqref{bc11}-\eqref{bc22}:
\begin{align}\label{phy_err1}
&\widehat{W}_t=-k_1\widehat{W}_x-k_2{\widehat{W}}-\gamma_1(x)(y_p-\hat {y}_p),\\\label{phy_err2}
&\widehat{V}_t=-k_3\widehat{V}_x-k_2{\widehat{W}}-\gamma_2(x)(y_p-\hat {y}_p)+\delta_1,\\
&\widehat{W}(0,t)=\delta_2, \quad \widehat{W}(L,t)=\widehat{V}(L,t)\label{phy_err3}.
\end{align}

The residual for the Physical Data-based Filter is given by
\begin{align}
    \label{residual_physical}
& r_p(t) = \widehat{V}^2(L,t)\diff x.
\end{align}
In order to show the asymptotic stability (under no attack) and input-to-state stability (under attack) of this physical data based residual signal $r_p(t)$, we transform the  error residual dynamics  \eqref{phy_err1}-\eqref{phy_err3} using backstepping transformation \cite{roy_dey2020}. The backstepping transformations used here are given as follows:
\begin{align}\label{bst1}
    \widehat{W}(x,t) = \phi(x,t) &- \int_x^L  \mathcal{F}(x,y)\phi(y,t)\diff y,\\ \label{bst2}
    \widehat{V}(x,t) = \psi(x,t) &-  \int_x^L  \mathcal{G}(x,y)\psi(y,t)\diff y-\int_x^L \mathcal{H}(x,y)\phi(y,t)\diff y.
\end{align}
These transformation maps the coupled PDE \eqref{phy_err1}-\eqref{phy_err3} to a boundary-condition coupled target PDE given as follows:
\begin{align}\label{target}
    &\phi_t = -k_1 \phi_x-k_2\phi,\quad \psi_t = -k_3 \psi_x + \widetilde{\delta_1},\\\label{target_BC}
    &\phi(L,t) = \psi(L,t),\quad \phi(0,t) ={\delta_2},
\end{align}
where the $\widetilde{\delta_1}$ is connected to the in-domain attack function $\delta_1$ in the following way:
\begin{align}
    &\delta_1(x,t)  = \widetilde{\delta_1}(x,t) -  \int_x^L  \mathcal{F}(x,y)\widetilde{\delta_1} (y,t)\diff y. 
\end{align}

Using the backstepping transformation \eqref{bst1}-\eqref{bst2} and comparing the target dynamics \eqref{target}-\eqref{target_BC} with the residual dynamics \eqref{phy_err1}-\eqref{phy_err3}, we obtain the dynamics of the kernels of the backstepping transformation:
\begin{align}
    &\mathcal{F}_x + \mathcal{F}_y = 0,\quad \mathcal{F}(0,y)=0,\\
    &\mathcal{G}_x + \mathcal{G}_y = 0,\quad  \mathcal{G}(0,y)=0,\\
    & k_3\mathcal{H}_x + k_1\mathcal{H}_y -k_2\mathcal{H} = k_2 \mathcal{F},\quad\mathcal{H}(0,y)=0 .
\end{align}

Furthermore, the physical filter gains $\gamma_1(x)$ and $\gamma_2(x)$ must be chosen so that
\begin{align}\label{gamma_1}
   \gamma_1(x) = -\frac{\mathcal{F}(x,L)\kappa_1 }{\rho^*},&\quad \gamma_2(x) = -\frac{\mathcal{G}(x,L)\kappa_3 }{\rho^*}.
\end{align}
Moreover, since $\widehat{V}(0,t)=0$ it yields
\begin{align} \label{psi_0}
    \psi(0,t) =  0.
\end{align}

With this we present our second theorem that specify the design criteria for the Physical Data-based Filter.

\begin{thmm}[Convergence of Residual Dynamics for Physical Data-based Filter]
Consider the error dynamics \eqref{phy_err1}-\eqref{phy_err3} and residual definition \eqref{residual_physical}. The residual signal is asymptotically stable in the following sense:
\begin{align}\label{thm21}
    & r_p(t)\leqslant c_3 r_p(0)\exp{(-\lambda_p t)}, 
\end{align}
 for some $c_3>0$ and $\lambda_p >0$ without any attack. Furthermore, the residual signal is input-to-state state stability in the following sense:
 \begin{align}\label{thm22}
    & r_p(t)\leqslant c_3 r_p(0)\exp{(-\lambda_p t)}+c_4\sup_t\left( \|\delta_1(.,t)\|^2+\delta_2^2(t)\right), 
\end{align}
 for some $c_4>0$ under attack. These conditions are satisfied if the filter gains meet the prescribed conditions in \eqref{gamma_1}.
 
\end{thmm}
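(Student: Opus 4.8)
The plan is to carry out the entire analysis in the \emph{target} coordinates $(\phi,\psi)$ furnished by the backstepping transformation \eqref{bst1}-\eqref{bst2}, since there the principal dynamics \eqref{target}-\eqref{target_BC} are decoupled in their transport parts and already carry the homogeneous inlet condition $\psi(0,t)=0$ from \eqref{psi_0}. The first fact I would record is that the residual is a boundary trace of $\psi$: evaluating \eqref{bst2} at $x=L$ makes both Volterra integrals collapse to zero, so $\widehat{V}(L,t)=\psi(L,t)$ and hence $r_p(t)=\psi^2(L,t)=\phi^2(L,t)$ by the coupling $\phi(L,t)=\psi(L,t)$. This reduces the theorem to an exponential / input-to-state estimate for the boundary trace of the target system, in direct parallel with Theorem 1.

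Next I would introduce the weighted Lyapunov functional $\mathcal{V}(t)=\int_0^L \tfrac{e^{-x}}{2}\big(\phi^2+\psi^2\big)\diff x$, the weight $e^{-x}$ being chosen exactly as in the proof of Theorem 1 so that integration by parts of the transport terms leaves a strictly negative volume contribution. Differentiating $\mathcal{V}$ along \eqref{target}, integrating the $\phi_x$ and $\psi_x$ terms by parts, and substituting the boundary data $\phi(0,t)=\delta_2$, $\psi(0,t)=0$ together with $\phi(L,t)=\psi(L,t)$ should produce a bound of the form $\dot{\mathcal{V}}(t)\leqslant -\lambda_p\mathcal{V}(t)+\tfrac{k_1}{2}\delta_2^2+\tfrac{1}{2\gamma}\|\widetilde{\delta_1}(.,t)\|^2$, where the in-domain forcing $\widetilde{\delta_1}$ is absorbed by Young's inequality with a free positive constant $\gamma$ as before. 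Since $\delta_1$ and $\widetilde{\delta_1}$ are tied by the bounded Volterra relation $\delta_1=\widetilde{\delta_1}-\int_x^L\mathcal{F}\,\widetilde{\delta_1}$, its bounded invertibility (available once the kernel equations for $\mathcal{F},\mathcal{G},\mathcal{H}$ are seen to be well-posed with bounded kernels) gives $\|\widetilde{\delta_1}(.,t)\|\leqslant C\|\delta_1(.,t)\|$. Gr\"{o}nwall's inequality then yields $\mathcal{V}(t)\leqslant \mathcal{V}(0)e^{-\lambda_p t}+m\sup_t(\|\delta_1(.,t)\|^2+\delta_2^2(t))$.

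The final step converts this functional estimate into \eqref{thm21}-\eqref{thm22}, and I would reproduce the two auxiliary inequalities of Theorem 1: a lower bound $m_2\, r_p(t)\leqslant\mathcal{V}(t)$ and an initial-time upper bound $\mathcal{V}(0)\leqslant m_3\, r_p(0)$, the latter invoking bounded, nonzero initial error data so that the ratio defining $m_3$ is finite. Setting $c_3=m_3/m_2$ and $c_4=m/m_2$ then gives \eqref{thm22}, and putting $\delta_1\equiv0$, $\delta_2\equiv0$ recovers \eqref{thm21}.

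The step I expect to be the main obstacle is precisely this last conversion, because $r_p$ is a \emph{pointwise boundary trace} whereas $\mathcal{V}$ is an $\mathcal{L}_2$-type integral, and such traces are not controlled by $\mathcal{L}_2$ norms without additional structure. To make $m_2\, r_p(t)\leqslant\mathcal{V}(t)$ legitimate I would lean on the explicit characteristic solution of the target system: since $\phi_t=-k_1\phi_x-k_2\phi$ with $\phi(0,t)=\delta_2$ is exactly solvable, its trace satisfies $\phi(L,t)=e^{-k_2 t}\phi(L-k_1 t,0)$ for $t<L/k_1$ and $\phi(L,t)=e^{-k_2 L/k_1}\delta_2(t-L/k_1)$ thereafter, which ties $r_p(t)$ directly to $\mathcal{V}(0)$, to $\delta_2$, and—through $\psi(L,t)=\phi(L,t)$ and the $\psi$-equation—to $\widetilde{\delta_1}$. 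This characteristic route both certifies the trace bound under the bounded-nonzero initial-condition assumption and, as a useful byproduct, exposes finite-time (hence exponential) decay of $r_p$ in the attack-free case, confirming the asserted rate $\lambda_p$. Establishing bounded well-posedness of the kernel PDEs, so that the Volterra map is invertible and $\|\widetilde{\delta_1}\|\lesssim\|\delta_1\|$ holds, is the remaining routine ingredient.
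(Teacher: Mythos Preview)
Your proposal is correct and follows essentially the same route the paper intends: the paper's own proof is a one-line deferral to \cite{roy_dey2020}, and the argument you outline---pass to the backstepping target system \eqref{target}-\eqref{target_BC}, use the same exponentially weighted Lyapunov functional as in Theorem~1, apply Young's and Gr\"{o}nwall's inequalities, then convert to residual bounds via the two auxiliary inequalities---is precisely the template of Theorem~1 adapted to the boundary-measurement setting, which is what that reference contains. Your identification of the trace-versus-$\mathcal{L}_2$ conversion as the delicate step is apt and in fact more scrupulous than the paper's own handling of the analogous step in Theorem~1.
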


\begin{proof}
The theorem can be proved following an approach similar to the one presented in \cite{roy_dey2020}.
\end{proof}

\section{Comparator}
As shown in Fig. \ref{SDF}, the residuals generated by the Social Data-based Filter $r_s(t)$ and Physical Data-based Filter $r_p(t)$ are propagated to the Comparator block. The purpose of the Comparator is to process generated residuals in a meaningful way in order to provide an attack-detection decision.

From the previous theoretical analysis, it is evident that under no attack scenarios both the residuals converge to zero. However, in practice, even under no attack  these residuals will never be identically zero because of system uncertainties arising from model and measurement inaccuracies. Accordingly, we define a threshold such that residual signals greater than that pre-determined value would indicate presence of an attack. Such computation of threshold is standard in the community of fault detection and generally obtained under no attack scenarios \cite{mansouri2016statistical}. In this work, we chose the threshold by finding the maximum value of the residual under no attack conditions \cite{Roy_its_2021}.

Once these thresholds are obtained for both the Physical and Social Data-based Filters, both the residuals are compared to their corresponding thresholds and a logical output is generated. In case of the Physical Data-based Filter, if the threshold is given by $r_{th,p}$, then $r_p(t)\geqslant r_{th,p}$ produces a \textit{high} and a \textit{low} otherwise. Similarly, for the Social Data-based Filter with a threshold $r_{th,s}$, if $r_s(t)\geqslant r_{th,s}$ then it produces a \textit{high}, otherwise a \textit{low}. 

\begin{table}[t]
\centering
\caption{Attack Detection Logic.}
\label{table1_1}
\begin{tabular}{ccc}
  \hline\hline
  Physical Residual  & Social Residual& Decision\\
  \hline
  Low $\left[r_p(t)<r_{th,p}\right]$ & Low $\left[r_s(t)<r_{th,s}\right]$ & No \\\hline
  High $\left[r_p(t)\geqslant r_{th,p}\right]$ & Low $\left[r_s(t)<r_{th,s}\right]$ & Yes \\\hline
  Low $\left[r_p(t)<r_{th,p}\right]$ & High $\left[r_s(t)\geqslant r_{th,s}\right]$ & Yes \\\hline
  High $\left[r_p(t)\geqslant r_{th,p}\right]$ & High $\left[r_s(t)\geqslant r_{th,s}\right]$ & Yes \\
\hline
\end{tabular}
\renewcommand{\arraystretch}{1}
\end{table}

Next, these logical signals are compared to determine the final attack-detection in the following way: if at least one of the residuals produces a high, then a positive attack decision is confirmed. In other words, only if both the filters produce a low signal, a negative attack decision is made. This implies that proposed cyber-attack detection schemes utilizes the redundancies of both physical data and social data to increase effective attack detection.  The complete attack-detection logic  for the Comparator block is shown in Table \ref{table1_1}.

\section{Simulation results}

In this section, we present simulation results to show the effectiveness of our cyber-attack detection scheme using the redundancies in Social and Physical Data-based Filters. We first train the LSTM neural network-based fake data isolator and test its performance over the testing dataset. Thereafter, we simulate both the microscopic model \eqref{micro} and the equivalent macroscopic model \eqref{lin1}-\eqref{lin2} under equivalent operating conditions. The microscopic model is used to generate vehicle-level data for \textit{Social Data-based Filter} whereas the macroscopic model is used to generate physical measurement data for the \textit{Physical Data-based Filter}. In order to emulate realistic scenario, we have injected noise in both physical and social measurements. The source of noise in physical data is the inaccuracies in loop detector measurements whereas the noise in social data arises from intermittent and delayed nature of GPS signals and tweets. The prescribed thresholds for the Social and Physical Data-based Filters are chosen as $1\times 10^{-5}$ and $2\times 10^{-5}$, respectively.


\subsection{Performance of LSTM for fake data isolation}

We  split  the  dataset described in Section III A into  35\%  testing  and   65\% training  datasets to train our LSTM neural network for fake data isolation/classification,. The neural network model is trained for 20 epochs and the model accuracy and training loss improvement over the epochs are shown in Fig. \ref{fig:LSTM_a}.


This classification model is tested using the multiple metrics. The confusion matrix for the test dataset is shown in Fig. \ref{fig:confusion_matrix}. First, we evaluate the performance of the network using the definition of 
$
    \text{accuracy} = \frac{\text{TP+TN}}{\text{TP+FP+TN+FN}},
$
where TP, FP, TN and FN imply True Positive, False Positive, True Negative and False Negative, respectively.  The accuracy of our network on the test dataset is 89\% leading to a misclassification rate of 11\%. 

As the error costs of  positive and negative classification are different, we look at the Sensitivity metric of the trained network which are defined as follows:
$\text{Sensitivity} = \frac{\text{TP}}{\text{TP+FN}}$. 
We intended to build a highly sensitive network such that maximum number of false social data are flagged. The Sensitivity of our network was obtained to be 98\%.


\begin{figure}[!b]
\centering
\subfloat[]{\includegraphics[width=0.3\columnwidth]{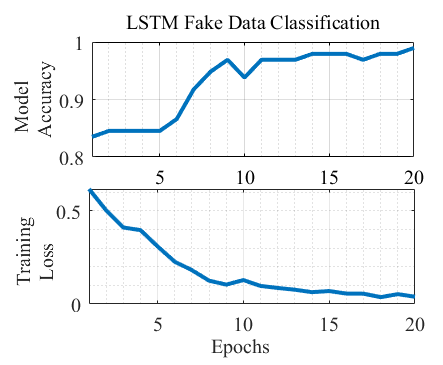}%
\label{fig:LSTM_a}}
\subfloat[]{\includegraphics[width=0.25\columnwidth]{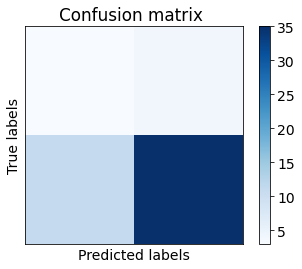}%
\label{fig:confusion_matrix}}
\caption{(a) Model accuracy and training loss for LSTM-based fake social data classifier. (b) Confusion matrix for detecting fake social data.}
\label{fig:Lstm_results}
\end{figure}

\subsection{Performance of detection scheme under nominal condition}
The macroscopic velocity and density profile for the traffic under nominal (no attack) condition (given by \eqref{lin1}-\eqref{lin2}) is shown in {Fig. \ref{res}}.  The social and physical residuals generated by Social and Physical Data-based Filters are shown in {Fig. \ref{res}}. We observe that both the residuals converge close to zero starting from non-zero initial conditions. Note that the residuals do not converge exactly to zero due to the presence of noise. Nevertheless, under this nominal operating condition, both the residuals remain under their respective threshold values (shown by dashed line in Fig. \ref{res}).


\begin{figure}[ht!]
    \centering
    \includegraphics[scale=0.2,width=0.5\textwidth]{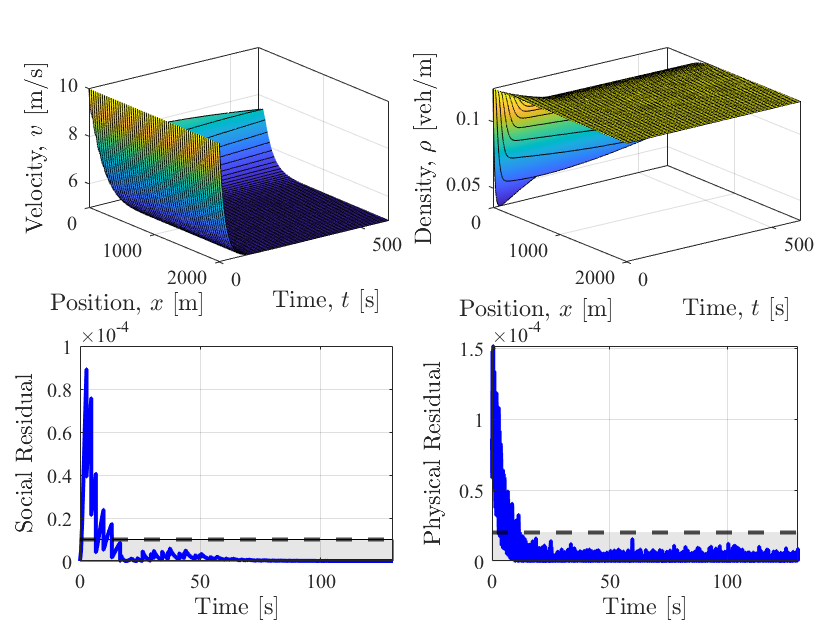}
    \caption{(Top) Velocity and density profile of freeway traffic under nominal operating condition. (Bottom) Social and physical data based residuals under nominal operating conditions.}
    \label{res}
\end{figure}


\subsection{Performance of detection scheme under in-domain cyber-attacks}
In this study, we illustrate the advantage of using both social and physical data as opposed to just physical data. We show three cases demonstrating the advantage of our proposed method.

\textit{Case I:} We inject a ``stealthy" in-domain attack at $100s$  to the velocity profile. This attack is ``stealthy'' in the sense that it is does not show up at the physical data sensor of the system (that is, in $\rho(L)$) \cite{roy_dey2020}. Essentially, this attack acts as an high amplitude perturbation occurring somewhere in-domain. The velocity and density response under this attack is shown in {Fig. \ref{fig:attack_phy_low} } where it is evident that the effect of in-domain attack does not show up significantly at the outlet measurement. Fig. \ref{fig:attack_phy_low} also shows the residual response under attack where the residual for both the filters remain below threshold until the injection of attack at 100$s$. However, since the outlet measurement is used by the Physical Data-based Filter, the residual of the Physical Data-based Filter remains below its threshold  even after the attack injection. On the other hand, as Social Data-based Filter uses in-domain data from the vehicles, its residual cross its threshold within 2.5$s$ of the attack. This provides a \textit{high} to the comparator which can make a positive attack detection decision using {Table \ref{table1_1}}. This implies that such attacks would remain undetected if Physical Data-based Filter is used exclusively. However, they will be detected by our scheme as we exploit the redundancies between social and physical data.

\begin{figure}[ht!]
    \centering
    \includegraphics[scale=0.2,width=0.5\textwidth]{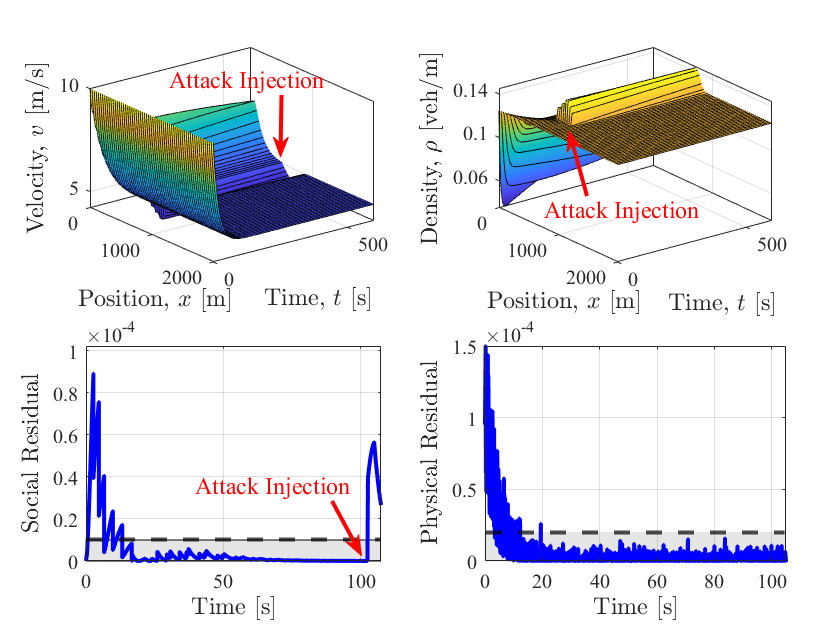}
    \caption{(Top) Velocity and density profile of freeway traffic with an in-domain attack. (Bottom) Social and physical data based residuals under cyber-attack. Attack detected by the Social Data-based Filter only. }
    \label{fig:attack_phy_low}
\end{figure}


It should also be noted here that Social Data-based Filters can only detect an attack if the social sensors are in the \textit{spatio-temporal impact zone} of the attack. Strictly speaking, this implies that the social sensor must lie in the cone of influence of the attack propagation. This indicates that the location and timing of the social sensor in relation to the injected attack is a key point in this setting.

\textit{Case II:} Social data is non-stationary as well as intermittent. This implies that under certain scenarios, social data sensors might not be available near the spatio-temporal impact zone of an attack.  In this case, if a physical sensor is present in the zone, Physical Data-based Filter can detect these attacks while the Social Data-based filter cannot. For example, in our setting, if the in-domain attack is injected closer to the physical sensor at the outlet at 100$s$. This is evident from the traffic velocity and density profile in { Fig. \ref{fig:attack_phy_high}}. We can also observe from {Fig. \ref{fig:attack_phy_high}} that the residuals of both the filters remain within the thresholds until the injection of attack at 100$s$. After the attack injection, the residual of the Physical Data-based Filter crosses the threshold in 1$s$ providing a \textit{high} to the comparator. On the other hand, since no vehicles are present in the immediate impact zone to transmit social data, the residual for the Social Data-based Filter remains below the threshold, signaling a \textit{low}. Using these two signals, the comparator can decide an attack has occurred in the system using Table \ref{table1_1}. 

It is to be noted that physical sensors are fixed and do not have the maneuverability of social data sensors. Moreover, they are expensive to install and cannot be deployed in large numbers. However, we see from this case study that despite such disadvantages of physical sensors, they can provide valuable attack information in case social sensors are unavailable. Thus, a faster response and mitigation can be undertaken using the output of the Physical Data-based Filter as well.

\begin{figure}[ht!]
    \centering
    \includegraphics[scale=0.1,width=0.5\textwidth]{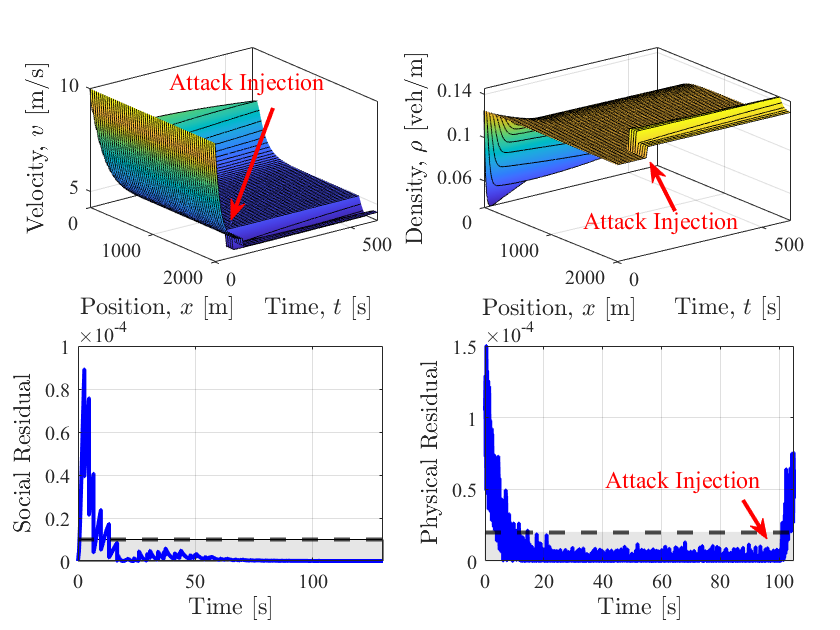}
    \caption{(Top) Velocity and density profile of freeway traffic with an in-domain attack near the outlet. (Bottom) Social and physical data based residuals under cyber-attack. Attack detected by the Physical Data-based Filter only. }
    \label{fig:attack_phy_high}
\end{figure}

\textit{Case III:} Finally, we present the case where the residuals of both the filters are affected by a cyber-attack. In this scenario, a wide-spread in-domain attack is injected the effect of which can be seen from the velocity and density profiles of the traffic in Fig. \ref{fig:attack_phy_high_soc_high}. Such a case is an ideal scenario where there are  social data sensors as well as physical data sensor in the impact zone of the attack and  both the residuals cross their thresholds after the attack injection. Consequently, the comparator receives \textit{high} signals from both the filters and is able to make a positive attack detection decision.  We also note here that, similar to previous cases, the residuals of both the filters remain below their respective thresholds before the injection of attack at 100$s$. Notably, this scenario can occur under two possible situations: (i) the magnitude of the attack is large such that it has a wide spatio-temporal impact zone, or (ii) irrespective of the size of the attack, both physical as well as social sensors are present in the zone of impact of the attack.

\begin{figure}[ht!]
    \centering
    \includegraphics[scale=0.2,width=0.5\textwidth]{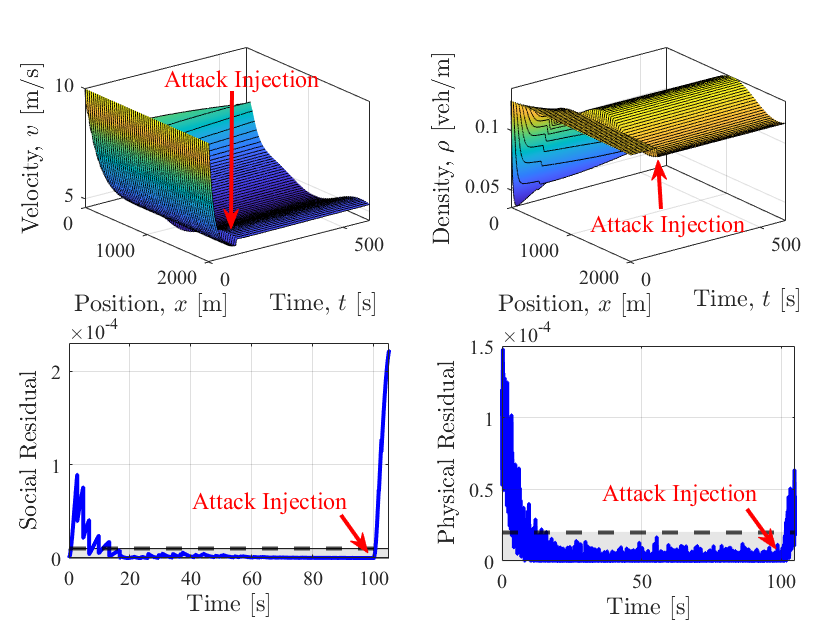}
    \caption{(Top) Velocity and density profile of freeway traffic with a wide-spread in-domain attack. (Bottom) Social and physical data based residuals under cyber-attack. Attack detected by both Physical and Social Data-based Filters. }
    \label{fig:attack_phy_high_soc_high}
\end{figure}


\section{Conclusions}
In this paper, we explore a cyber-attack detection scheme for socio-technical transportation systems. We consider the availability of both physical and social data from vehicle/infrastructure sensors and human users' mobile devices, respectively. We exploit the redundancy between these social and physical data and design our cyber-attack detection scheme based on macroscopic traffic model. Essentially, the scheme consists of a Social Data-based Filter and a Physical Data-based Filter running in parallel and producing residuals. The attack decision is made by comparing these two residuals. We have analyzed the mathematical properties of such filters using Lyapunov's stability theory. Furthermore, we performed simulation studies that illustrate the efficacy of the proposed scheme. As future studies, we plan to explore the effectiveness of the proposed scheme  under various types of uncertainties related to social data generation and processing.

\bibliography{ref1}

%








\end{document}